\pgfplotsset{compat=newest}
\definecolor{darkgreen}{rgb}{0.125,0.5,0.169}
\tikzset{>=latex}
\definecolor{mittelblau}{RGB}{0, 126, 198}
\definecolor{violettblau}{cmyk}{0.9, 0.6, 0, 0}
\definecolor{rot}{RGB}{238, 28 35}
\definecolor{apfelgruen}{RGB}{140, 198, 62}
\definecolor{gelb}{RGB}{1, 221, 0}
\definecolor{orange}{RGB}{244, 111, 33}
\definecolor{pink}{RGB}{237, 0, 140}
\definecolor{lila}{RGB}{128, 10, 145}
\definecolor{hellgrau}{RGB}{224, 224, 224}
\definecolor{mittelgrau}{RGB}{128, 128, 128}
\definecolor{dunkelgrau}{RGB}{80,80,80}
\definecolor{anthrazit}{RGB}{19, 31, 31}
\definecolor{bgorange}{HTML}{fcc0a7}
\definecolor{bggreen}{HTML}{ccebb9}
\newtheorem{theorem}{Theorem}
\DeclareMathOperator*{\argmax}{arg\,max}
\begin{document}

\title{Deep Learning for Uplink CSI-based Downlink Precoding in FDD massive MIMO\\Evaluated on Indoor Measurements}

\author{\IEEEauthorblockN{Florian Euchner, Niklas S\"uppel, Marc Gauger, Sebastian D\"orner, Stephan ten Brink \\}

\IEEEauthorblockA{
Institute of Telecommunications, Pfaffenwaldring 47, University of  Stuttgart, 70569 Stuttgart, Germany \\ \{euchner,sueppel,gauger,doerner,tenbrink\}@inue.uni-stuttgart.de
}

}

\maketitle

\begin{abstract}
    When operating massive \ac{MIMO} systems with \ac{UL} and \ac{DL} channels at different frequencies (\ac{FDD} operation), acquisition of \ac{CSI} for downlink precoding is a major challenge.
    Since, barring transceiver impairments, both \ac{UL} and \ac{DL} \ac{CSI} are determined by the physical environment surrounding transmitter and receiver, it stands to reason that, for a static environment, a mapping from \ac{UL} \ac{CSI} to \ac{DL} \ac{CSI} may exist.
    First, we propose to use various \ac{NN}-based approaches that learn this mapping and provide baselines using classical signal processing.
    Second, we introduce a scheme to evaluate the performance and quality of generalization of all approaches, distinguishing between known and previously unseen physical locations.
    Third, we evaluate all approaches on a real-world indoor dataset collected with a 32-antenna channel sounder.
\end{abstract}

\acresetall

\section{Introduction and Problem Statement}
Massive \ac{MIMO} is widely accepted to be a crucial technology for increasing the spectral efficiency of future cellular wireless systems through spatial multiplexing.
At the multi-antenna \ac{BS}, it relies on precoding in the \ac{DL} direction, which requires accurate \ac{CSI} for the channel between \ac{BS} and \ac{UE}.
The \ac{BS} estimates \ac{CSI} for the \ac{UL} channel from pilots transmitted by the \ac{UE}.
In \ac{TDD} operation, thanks to channel reciprocity, \ac{DL} \ac{CSI} can be directly derived from \ac{UL} \ac{CSI}.
If, however, \ac{UL} and \ac{DL} channels are at different frequencies (\ac{FDD} operation), acquisition of \ac{DL} \ac{CSI} is challenging.
Sending downlink pilots and obtaining \ac{CSI} feedback from \acp{UE} produces overhead that can become prohibitively large for high numbers of antennas \cite{yang2019deep}.

\begin{figure}
    \centering
    \scalebox{0.9}{
        \input{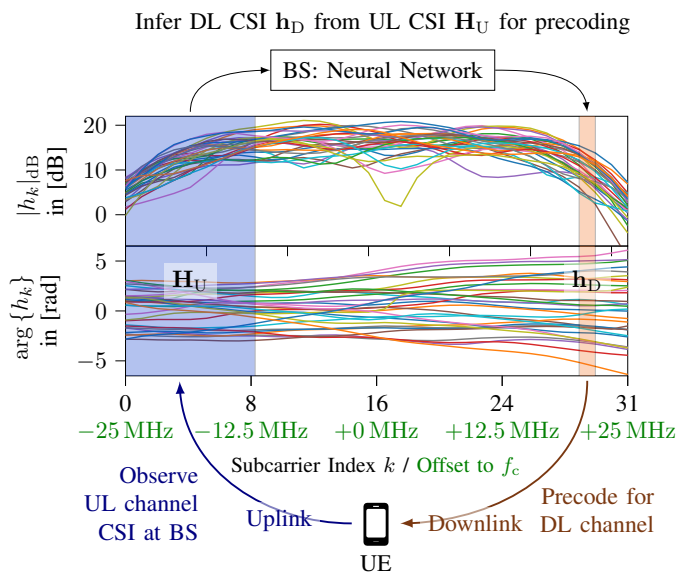}
    }
    \caption{Basic principle of operation}

    \label{fig:principle}
    \vspace{-0.5cm}
\end{figure}

Even as massive \ac{MIMO} was originally conceived, it was conjectured that \ac{DL} \ac{CSI} feedback in \ac{FDD} operation could be rendered unnecessary by exploiting relationships between \ac{UL} and \ac{DL} \ac{CSI} \cite[Section VII.J]{marzetta}.
For example, in typical radio environments, measurements have indicated that angles of arrival and departure are similar for \ac{UL} and \ac{DL} channels \cite{hugl2002spatial}.
However, in environments with many scatterers and potentially more than one strong propagation path, the relationship between \ac{UL} and \ac{DL} \ac{CSI} is no longer this simple.
Under the premise that the mapping from \ac{UL} \ac{CSI} to \ac{DL} \ac{CSI} is bijective, which is reasonable to assume for many practical environments \cite{alrabeiah2019deep}, a \ac{DNN} is capable of learning this mapping.
This learning-based approach, illustrated in Fig. \ref{fig:principle}, has been proposed in several earlier publications and has been evaluated on simulated channel models \cite{yang2019deep} \cite{zhang2020cv}.
By contrast, experiments with measured channel data are rare \cite{arnold2019enabling} and cannot be replicated without the underlying datasets.
We address these issues by making the following contributions:
\begin{itemize}
    \item We derive upper and lower bounds for downlink precoding performance in Section \ref{sec:baselines}.
    \item In Section \ref{sec:deeplearning}, we verify the concept of deep learning-based \ac{CSI} estimation, which has primarily been developed with simulated channels, on a measured, publicly available \ac{CSI} dataset, which is introduced in Section \ref{sec:dataset}.
    \item In Section \ref{sec:generalization}, we compare the quality of \ac{DL} \ac{CSI} estimates for different \ac{NN} architectures for our particular dataset and propose an evaluation framework for different network architectures that takes into account the difference in the quality of estimates in previously seen and unseen regions of the physical environment\footnote{A tutorial for a special case of \ac{DL} \ac{CSI} estimation is available at \mbox{\url{https://dichasus.inue.uni-stuttgart.de/tutorials/tutorial/downlinkcsi/}}}.
\end{itemize}

\section{Model, Metrics and Baselines}
\label{sec:baselines}
In the context of this work, we always consider the case of a single \ac{BS} antenna array with $M$ co-located antennas and a single \ac{UE} with one antenna.
We assume \ac{OFDM}-modulated signals for both uplink and downlink, but restrict ourselves to estimating the \ac{CSI} for a single subcarrier in the downlink channel.
It is important to note that this approach can easily be extended to all subcarriers in the downlink channel by using one estimator per subcarrier.

We denote the unknown channel coefficient vector for this particular \ac{DL} subcarrier by $\mathbf h_\mathrm{D} \in \mathbb C^M$.
We assume that \ac{UL} channel coefficients $\mathbf H_\mathrm{U} \in \mathbb C^{M \times N_\mathrm{sub}}$ for all antennas and all $N_\mathrm{sub}$ uplink subcarriers are known to the \ac{BS}.
We furthermore neglect hardware impairments and noise and assume that both \ac{UL} and \ac{DL} \ac{CSI} are determined by some latent variable $\mathbf x$, which captures all properties of the radio environment such as location and orientation of transmitter, receiver and scatterers:
\begin{equation}
    f_U: \mathbf x \mapsto \mathbf H_\mathrm{U} \quad \text{and} \quad f_D: \mathbf x \mapsto \mathbf h_\mathrm{D}
    \label{eq:envchannelmapping}
\end{equation}

In Eq. (\ref{eq:envchannelmapping}), $f_\mathrm{U}$ and $f_\mathrm{D}$ are deterministic mappings from environment properties $\mathbf x$ to \ac{UL} and \ac{DL} channel coefficients, respectively.
If $f_U$ is bijective, which has been argued to be probable in practical environments \cite{alrabeiah2019deep} \cite{vieira2017deep}, and the \ac{BS} is capable of learning $f_\mathrm{D} \circ f_\mathrm{U}^{-1}$, which \acp{NN} are theoretically capable of according to the universal approximation theorem, it can compute $\mathbf h_\mathrm{D}$ as $\mathbf h_\mathrm{D} = f_\mathrm{D} \circ f_\mathrm{U}^{-1}\left( \mathbf H_\mathrm{U} \right)$.
In practice, $f_\mathrm{U}$ may only be bijective on a (large) subset of the domain, the universal approximation theorem only holds for arbitrarily large \ac{NN} sizes, only limited training data is available and \ac{UL} channel estimates are noisy, hence the learned mapping $\hat { \bm \theta }: \mathbf H_\mathrm{U} \to \mathbf w$ will only produce an \emph{estimate} $\mathbf w \in \mathbb C^M$, $\mathbf w \approx \mathbf h_\mathrm{D}$ for the true downlink channel $\mathbf{h}_\mathrm{D}$.

A suitable metric for the quality of the estimate $\mathbf w$ for one particular realization of the channel is given by the squared cosine similarity of $\mathbf h_\mathrm{D}$ and $\mathbf w$:
\begin{equation}
    P = \frac{\left|\mathbf h_\mathrm{D}^\mathrm{H} \mathbf w\right|^2}{\left\lVert \mathbf h_\mathrm{D} \right\rVert^2 \lVert \mathbf w \rVert^2}
    \label{eq:power}
\end{equation}

In contrast to a \ac{MSE} metric, the expression for $P$ in Eq. (\ref{eq:power}) has the advantage of being interpretable as the normalized received power on the considered downlink subcarrier at the \ac{UE} when precoding with vector $\mathbf w^*$ and transmitting across the channel $\mathbf h_\mathrm{D}$.
In this sense, a normalized received power of $P = 1$ corresponds to perfect knowledge of the downlink channel down to a global phase rotation, i.e., $\mathbf w = \mathrm e^{\mathrm j \varphi} \mathbf h_\mathrm{D}$ with arbitrary $\varphi \in \mathbb R$.

Eq. (\ref{eq:power}) refers to one particular downlink channel $\mathbf h_\mathrm{D}$ and estimate $\mathbf w$.
In practice, $\mathbf w$ is estimated at the \ac{BS} based on $\mathbf H_\mathrm{U}$, i.e., $\mathbf w = \hat { \bm \theta } \left( \mathbf H_\mathrm{U} \right)$, and $\mathbf H_\mathrm{U}$ and $\mathbf h_\mathrm{D}$ are modelled as random variables that are jointly distributed over some distribution $\mathcal H$:
\[
    (\mathbf H_\mathrm{U}, \mathbf h_\mathrm{D}) \sim \mathcal H
\]

To obtain a more universal indicator $\bar P$ for the performance of a \ac{DL} \ac{CSI} estimator $\hat {\bm \theta}$, we consider the expected value of $P$, i.e., the average normalized received power, over the whole distribution $\mathcal H$:
\begin{equation}
    \bar P = \mathrm{E}_{(\mathbf H_\mathrm{U}, \mathbf h_\mathrm{D}) \sim \mathcal H} \left[\frac{\left|\mathbf h_\mathrm{D}^\mathrm{H} \hat { \bm \theta }(\mathbf H_\mathrm{U})\right|^2}{\left\lVert \mathbf h_\mathrm{D} \right\rVert^2 \lVert \hat { \bm \theta }(\mathbf H_\mathrm{U}) \rVert^2} \right]
    \label{eq:poweravg}
\end{equation}

Without any knowledge about $\mathcal H$, it is still possible to achieve an average normalized received power of $\bar P = \frac{1}{M}$ through the use of random precoding vectors, as the following theorem will show.

\begin{theorem}[Random Precoding Baseline]
    \label{thm:randomprecoding}
    For any arbitrary distribution $(\mathbf H_\mathrm{U}, \mathbf h_\mathrm{D}) \sim \mathcal H$, \textbf{random precoding} with channel estimates $\mathbf w \in \mathbb C^M, ~ \mathbf w := \frac{\mathbf v}{\lVert \mathbf v \rVert}$ where $\mathbf v \sim \mathcal {CN}(\mathbf 0, \mathbf I_M)$ and independent of $\mathbf h_\mathrm{D}$, $\mathbf H_\mathrm{U}$ leads to an expected received power
    \begin{equation}
        \bar P_\mathrm{rand} := \mathrm{E} \left[ \frac{|\mathbf h_\mathrm{D}^\mathrm{H} \mathbf w|^2}{\lVert \mathbf h_\mathrm{D} \rVert^2} \right] = \frac{1}{M}.
        \label{eq:randomprecoding}
    \end{equation}

\end{theorem}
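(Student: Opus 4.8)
The plan is to condition on $\mathbf{h}_\mathrm{D}$ and exploit the unitary invariance of the isotropic complex Gaussian. Since $\mathbf{v} \sim \mathcal{CN}(\mathbf{0}, \mathbf{I}_M)$ is independent of $\mathbf{h}_\mathrm{D}$ and $\mathbf{w} = \mathbf{v}/\lVert\mathbf{v}\rVert$ is well defined almost surely (the event $\mathbf{v} = \mathbf{0}$ has probability zero), one may fix a realization of $\mathbf{h}_\mathrm{D} \neq \mathbf{0}$ and set $\mathbf{u} := \mathbf{h}_\mathrm{D}/\lVert\mathbf{h}_\mathrm{D}\rVert$, so that the integrand becomes $|\mathbf{u}^\mathrm{H}\mathbf{w}|^2 = |\mathbf{u}^\mathrm{H}\mathbf{v}|^2 / \lVert\mathbf{v}\rVert^2$. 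Note that $\lVert\mathbf{w}\rVert = 1$, which is why the normalization by $\lVert\mathbf{w}\rVert^2$ appearing in Eq.~(\ref{eq:power}) is absent from Eq.~(\ref{eq:randomprecoding}).

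First I would remove the dependence on the particular direction $\mathbf{u}$. Choose a unitary matrix $\mathbf{Q} \in \mathbb{C}^{M\times M}$ with $\mathbf{Q}\mathbf{u} = \mathbf{e}_1$, the first standard basis vector. Because $\mathcal{CN}(\mathbf{0},\mathbf{I}_M)$ is invariant under unitary transformations, $\mathbf{v}' := \mathbf{Q}\mathbf{v}$ is again $\mathcal{CN}(\mathbf{0},\mathbf{I}_M)$, and using $\mathbf{Q}^\mathrm{H}\mathbf{Q} = \mathbf{I}_M$ we get $|\mathbf{u}^\mathrm{H}\mathbf{v}|^2 = |(\mathbf{Q}\mathbf{u})^\mathrm{H}(\mathbf{Q}\mathbf{v})|^2 = |v_1'|^2$ and $\lVert\mathbf{v}\rVert^2 = \lVert\mathbf{v}'\rVert^2 = \sum_{i=1}^{M}|v_i'|^2$. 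Hence the conditional expectation equals $\mathrm{E}\big[\,|v_1'|^2 / \sum_{i=1}^{M}|v_i'|^2\,\big]$, which depends neither on $\mathbf{u}$ nor on $\mathbf{h}_\mathrm{D}$.

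The second step is a symmetry-and-summation argument. The coordinates $v_1',\dots,v_M'$ of $\mathbf{v}'$ are i.i.d.\ $\mathcal{CN}(0,1)$, so by exchangeability $\mathrm{E}\big[\,|v_i'|^2 / \sum_{j=1}^{M}|v_j'|^2\,\big]$ takes the same value $c$ for every index $i$. Summing over $i$ and using linearity of expectation gives $Mc = \mathrm{E}\big[\,\sum_{i}|v_i'|^2 / \sum_{j}|v_j'|^2\,\big] = \mathrm{E}[1] = 1$, hence $c = 1/M$. Since the conditional expectation of the ratio equals $1/M$ for every $\mathbf{h}_\mathrm{D}$, the law of total expectation yields $\bar P_\mathrm{rand} = 1/M$.

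I do not anticipate a genuine obstacle; the only care needed is bookkeeping: observing that $\mathbf{v} = \mathbf{0}$ (and, implicitly, $\mathbf{h}_\mathrm{D} = \mathbf{0}$) occurs with probability zero so the ratio is a.s.\ defined, and justifying the interchange of expectation with the finite sum, which is immediate because the summands are nonnegative and bounded by $1$. The crux is merely invoking unitary invariance of $\mathcal{CN}(\mathbf{0},\mathbf{I}_M)$ correctly so that the statement collapses to the coordinate symmetry of an i.i.d.\ vector.
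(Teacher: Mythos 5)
Your proof is correct, and it rests on the same two ingredients as the paper's --- unitary invariance of $\mathcal{CN}(\mathbf 0, \mathbf I_M)$ and the exchangeability of the coordinates --- but it organizes them differently. The paper expands $|\mathbf h_\mathrm{D}^\mathrm{H}\mathbf w|^2$ into the double sum $\sum_{i}\sum_{j} h_{\mathrm{D},i}^* h_{\mathrm{D},j}\,\mathrm{E}[w_i w_j^*]$, kills the off-diagonal terms via a sign-flip symmetry ($\mathrm{E}[w_i w_j^*]=-\mathrm{E}[w_i w_j^*]=0$ for $i\neq j$), and then evaluates the diagonal terms as $\mathrm{E}[|w_i|^2]=1/M$ from $\lVert\mathbf w\rVert^2=1$; in effect it establishes $\mathrm{E}[\mathbf w\mathbf w^\mathrm{H}]=\frac{1}{M}\mathbf I_M$ and contracts with $\mathbf h_\mathrm{D}$. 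You instead condition on $\mathbf h_\mathrm{D}$ and rotate the frame so that its direction becomes $\mathbf e_1$, collapsing the whole expression to $\mathrm{E}\bigl[|v_1'|^2/\sum_j |v_j'|^2\bigr]$, which the summation trick evaluates as $1/M$. Your route avoids the cross-term argument entirely and uses unitary invariance only once, which is arguably cleaner; it also makes explicit some bookkeeping the paper glosses over (almost-sure well-definedness of the ratio, the law of total expectation justifying the conditioning). The paper's route has the minor advantage of exhibiting the second-moment matrix of $\mathbf w$ explicitly, which is reusable information, but the two arguments are of essentially equal strength and generality.
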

\begin{proof}
    Noticing that $\mathbf h_\mathrm{D}$ and $\mathbf w$ are independent and since $|\mathbf h_\mathrm{D}^\mathrm{H} \mathbf w|^2 = \mathbf h_\mathrm{D}^\mathrm{H} \mathbf w \mathbf h_\mathrm{D}^\mathrm{T} \mathbf w^*$, we can exchange the order of the expectation operator and the scalar product sums in Eq. (\ref{eq:randomprecoding}):
    \begin{align*}
        \bar P_\mathrm{rand} 
        &= \mathrm{E}_{\mathbf h_\mathrm{D}} \left[ \frac{1}{\lVert \mathbf h_\mathrm{D} \rVert^2} ~ \sum_{i = 1}^M \sum_{j = 1}^M h_{\mathrm{D}, i}^* h_{\mathrm{D}, j} \mathrm{E}_{\mathbf w} \left[w_i w_j^*\right] \right].
    \end{align*}
    
    Next, we need to show that $\mathrm E[w_i w_j^*] = 0$ for $i \neq j$.
    For this, first note that the distribution of $\mathbf v$ is invariant under unitary transformations $Q$, and so is $\mathbf w$ since $Q \mathbf w = \frac{Q \mathbf v}{\lVert \mathbf v \rVert} = \frac{Q \mathbf v}{\lVert Q \mathbf v \rVert}$.
    In particular, this implies that the distributions of $\mathbf w = (\ldots, w_i, \ldots, w_j, \ldots)$ and 
    $\mathbf w' = (\ldots, -w_i, \ldots, w_j, \ldots)$ are identical and hence $\mathrm{E} \left[ w_i w_j^* \right] = -\mathrm{E} \left[ w_i w_j^* \right] = 0$.

    With this, $\bar P_\mathrm{rand}$ further simplifies to
    \begin{align*}
        \bar P_\mathrm{rand} 
        &= \mathrm{E}_{\mathbf h_\mathrm{D}} \left[ \frac{1}{\lVert \mathbf h_\mathrm{D} \rVert^2} ~ \sum_{i = 1}^M |h_{\mathrm{D}, i}|^2 ~ \mathrm{E}_{\mathbf w} \left[|w_i|^2\right] \right].
    \end{align*}
    For symmetry reasons, $\mathrm E_{\mathbf w} \left[|w_i|^2 \right] = \mathrm E_{\mathbf w} \left[|w_j|^2 \right]$ for any $i, j$.
    Hence, $\mathrm E_{\mathbf w} \left[\mathbf w^\mathrm{H} \mathbf w \right] = \mathrm E_{\mathbf w} \left[\sum_{i = 1}^M |w_i|^2 \right] = \sum_{i = 1}^M \mathrm E_{\mathbf w} \left[|w_i|^2 \right] = M ~ \mathrm E_{\mathbf w} \left[|w_i|^2 \right]$.
    Since $\mathbf w^\mathrm{H} \mathbf w = \lVert\mathbf w\rVert^2 = 1$, we find that $\mathrm E_{\mathbf w} \left[|w_i|^2\right] = \frac{1}{M}$ for any $i$:

    \begin{align*}
        \bar P_\mathrm{rand} &= \mathrm{E}_{\mathbf h_\mathrm{D}} \left[ \frac{\sum_{i = 1}^M h_{\mathrm{D}, i}^* h_{\mathrm{D}, i}}{\lVert \mathbf h_\mathrm{D} \rVert^2} ~ \frac{1}{M} \right]
    \end{align*}
    
    Since $\sum_{i = 1}^M h_{\mathrm{D}, i}^* h_{\mathrm{D}, i} = \lVert \mathbf h_\mathrm{D} \rVert^2$, all terms depending on $\mathbf h_\mathrm{D}$ cancel, which proves that Eq. (\ref{eq:randomprecoding}) holds for arbitrary $\mathcal H$:
    \begin{align*}
        \bar P_\mathrm{rand} &= \mathrm{E}_{\mathbf h_\mathrm{D}} \left[ \frac{1}{M} \right] = \frac{1}{M} \quad \qedhere
    \end{align*}
\end{proof}


For real channels, random precoding is not a fair benchmark to compare \ac{NN}-generated estimates against, since it does not take the prior distribution of $\mathbf h_\mathrm{D}$ over a dataset into account.
A badly designed \ac{NN} could just learn the prior distribution of $\mathbf h_\mathrm{D}$ and not extract information from $\mathbf H_\mathrm{U}$.
As another baseline, Theorem \ref{thm:apriori} describes a precoding technique with a constant \ac{DL} channel estimate $\mathbf w$ that exploits a-priori information.

\begin{theorem}[Principal Component Baseline]
    \label{thm:apriori}
    To maximize the mean normalized power $\bar P$ over the distribution $(\mathbf H_\mathrm{U}, \mathbf h_\mathrm{D}) \sim \mathcal H$ under the restriction that the \ac{DL} channel estimate $\mathbf w$ is constant and $\lVert \mathbf w \rVert = 1$, $\mathbf w$ must be chosen such that $\mathbf w = \mathbf w_\mathrm{max}$, where $\mathbf w_\mathrm{max}$ is the eigenvector corresponding to the largest eigenvalue of the auto-correlation matrix $\mathbf R:= \mathrm{E}_{\mathbf h_\mathrm{D}} \left[ \frac{\mathbf h_\mathrm{D} \mathbf h_\mathrm{D}^\mathrm{H}}{|\mathbf h_\mathrm{D}^\mathrm{H} \mathbf h_\mathrm{D}|} \right]$. We define
    \[
        \bar P_\mathrm{princ} := \max_{\lVert\mathbf w\rVert = 1} ~ \mathrm{E}_{\mathbf h_\mathrm{D}} \left[ \frac{|\mathbf h_\mathrm{D}^\mathrm{H} \mathbf w|^2}{\lVert \mathbf h_\mathrm{D} \rVert^2} \right].
    \]
\end{theorem}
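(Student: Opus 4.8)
The plan is to collapse the optimization to a Rayleigh--quotient maximization. The key observation is that requiring the estimate $\mathbf w$ to be \emph{constant} means the estimator $\hat{\bm\theta}$ ignores its input, i.e.\ $\hat{\bm\theta}(\mathbf H_\mathrm{U}) = \mathbf w$ for every $\mathbf H_\mathrm{U}$; hence the outer expectation over $\mathbf H_\mathrm{U}$ in the definition of $\bar P$ is vacuous and, using $\lVert\mathbf w\rVert = 1$, we are left with $\bar P = \mathrm{E}_{\mathbf h_\mathrm{D}}\!\left[\, |\mathbf h_\mathrm{D}^\mathrm{H}\mathbf w|^2 / \lVert\mathbf h_\mathrm{D}\rVert^2 \,\right]$, which a priori could still depend on the full joint law $\mathcal H$ but in fact only involves the marginal of $\mathbf h_\mathrm{D}$.

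First I would rewrite the integrand as a quadratic form in $\mathbf w$: since $|\mathbf h_\mathrm{D}^\mathrm{H}\mathbf w|^2 = \mathbf w^\mathrm{H}\mathbf h_\mathrm{D}\mathbf h_\mathrm{D}^\mathrm{H}\mathbf w$ and $\lVert\mathbf h_\mathrm{D}\rVert^2 = \mathbf h_\mathrm{D}^\mathrm{H}\mathbf h_\mathrm{D} = |\mathbf h_\mathrm{D}^\mathrm{H}\mathbf h_\mathrm{D}|$, the integrand equals $\mathbf w^\mathrm{H}\big(\mathbf h_\mathrm{D}\mathbf h_\mathrm{D}^\mathrm{H}/|\mathbf h_\mathrm{D}^\mathrm{H}\mathbf h_\mathrm{D}|\big)\mathbf w$. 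Because $\mathbf w$ is deterministic, linearity of expectation pulls it outside the expectation, giving $\bar P = \mathbf w^\mathrm{H}\mathbf R\mathbf w$ with $\mathbf R$ exactly the matrix in the statement. At this point one should note that $\mathbf R$ is a well-defined finite matrix (each realization $\mathbf h_\mathrm{D}\mathbf h_\mathrm{D}^\mathrm{H}/\lVert\mathbf h_\mathrm{D}\rVert^2$ has unit trace and entries of modulus $\le 1$, so the expectation exists entrywise), and that $\mathbf R$ is Hermitian and positive semidefinite, being an average of the rank-one positive semidefinite matrices $\mathbf h_\mathrm{D}\mathbf h_\mathrm{D}^\mathrm{H}/\lVert\mathbf h_\mathrm{D}\rVert^2$.

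Then I would invoke the Rayleigh--Ritz (Courant--Fischer) theorem for Hermitian matrices: $\max_{\lVert\mathbf w\rVert = 1}\mathbf w^\mathrm{H}\mathbf R\mathbf w = \lambda_\mathrm{max}(\mathbf R)$, attained precisely on the unit sphere of the eigenspace of the largest eigenvalue, so $\mathbf w = \mathbf w_\mathrm{max}$ is optimal. Concretely, writing the spectral decomposition $\mathbf R = \sum_{i=1}^M \lambda_i \mathbf u_i \mathbf u_i^\mathrm{H}$ with $\lambda_1 \ge \cdots \ge \lambda_M \ge 0$ and expanding $\mathbf w = \sum_i c_i \mathbf u_i$ with $\sum_i |c_i|^2 = 1$ yields $\mathbf w^\mathrm{H}\mathbf R\mathbf w = \sum_i \lambda_i |c_i|^2 \le \lambda_1$, with equality iff $\mathbf w$ lies in the $\lambda_1$-eigenspace. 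This both proves $\bar P_\mathrm{princ} = \lambda_\mathrm{max}(\mathbf R)$ and identifies the maximizer.

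I do not expect a genuine obstacle here: the argument is routine once the quadratic-form rewriting is done. The two points that deserve a line of care are (i) making explicit that a constant $\mathbf w$ decouples the expectation so $\bar P$ becomes a single Rayleigh quotient, and (ii) checking that $\mathbf R$ is finite and Hermitian positive semidefinite so the spectral theorem applies. A minor caveat worth a remark is that $\mathbf w_\mathrm{max}$ is determined only up to a global phase (and, if $\lambda_\mathrm{max}$ is degenerate, only up to an arbitrary unit vector in the leading eigenspace), consistent with the observation after Eq.~(\ref{eq:power}) that precoding is meaningful only modulo a global phase.
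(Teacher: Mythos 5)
Your proposal is correct. The first half coincides with the paper's proof: both rewrite the objective as the quadratic form $\mathbf w^\mathrm{H} \mathbf R \mathbf w$ by pulling the deterministic $\mathbf w$ out of the expectation. Where you diverge is in how the constrained maximization is resolved. The paper forms the Lagrangian $\mathcal L(\mathbf w) = \mathbf w^\mathrm{H}\mathbf R\mathbf w - \lambda(\mathbf w^\mathrm{H}\mathbf w - 1)$, derives the stationarity condition $\mathbf R\mathbf w = \lambda\mathbf w$, and concludes that the maximizer is the eigenvector of the largest eigenvalue. You instead invoke Rayleigh--Ritz via the spectral decomposition $\mathbf R = \sum_i \lambda_i \mathbf u_i\mathbf u_i^\mathrm{H}$ and the expansion $\mathbf w = \sum_i c_i\mathbf u_i$, giving $\mathbf w^\mathrm{H}\mathbf R\mathbf w = \sum_i \lambda_i|c_i|^2 \le \lambda_1$ with an explicit equality condition. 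Your route is slightly more complete: the Lagrangian argument only identifies the stationary points as eigenvectors and then asserts (correctly but without justification of attainment) that the largest eigenvalue wins, whereas your expansion proves both the bound and its achievability in one line, and additionally verifies that $\mathbf R$ is Hermitian positive semidefinite so the spectral theorem applies. The phase/degeneracy caveat you add is a sensible remark the paper omits. Both arguments are standard and land on the same result.
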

\begin{proof}
    The objective is to find $\mathbf w_\mathrm{max}$ according to
    \begin{align*}
        \mathbf w_\mathrm{max} &=
        \argmax_{\lVert\mathbf w\rVert = 1} ~ \mathrm{E}_{\mathbf h_\mathrm{D}} \left[ \frac{\mathbf w^\mathrm{H} \mathbf h_\mathrm{D} \mathbf h_\mathrm{D}^\mathrm{H} \mathbf w}{\lVert \mathbf h_\mathrm{D} \rVert^2} \right] \\
        &= \argmax_{\lVert\mathbf w\rVert = 1} ~ \mathbf w^\mathrm{H} \mathbf R \mathbf w.
    \end{align*}
    By taking the derivative of the Lagrange function $\mathcal L(\mathbf w) = \mathbf w^{\mathrm H} \mathbf R \mathbf w - \lambda (\mathbf w^{\mathrm H} \mathbf w - 1)$ with respect to $\mathbf w$, we find that $\mathbf R \mathbf w = \lambda \mathbf w$.
    Hence, $\mathbf w$ is an eigenvector of $\mathbf R$ and the function in
    \[
        \mathbf w_\mathrm{max} = \argmax_{\lVert\mathbf w\rVert = 1} ~ \mathbf w^\mathrm{H} \mathbf R \mathbf w = \argmax_{\lVert\mathbf w\rVert = 1} ~ \mathbf w^\mathrm{H} \lambda \mathbf w
    \]
    is maximized if $\mathbf w$ corresponds to the largest eigenvalue $\lambda$.
\end{proof}

\begin{figure*}
    \begin{subfigure}[t]{0.29\textwidth}
        \centering
        \begin{tikzpicture}
            \begin{axis}[
                width=0.8\textwidth,
                height=0.8\textwidth,
                scale only axis,
                xmin=-6.5,
                xmax=0.5,
                ymin=-3.5,
                ymax=3.5,
                xlabel={$x$ coordinate [m]},
                ylabel={$y$ coordinate [m]}
                ]
        
                \addplot[thick,blue] graphics[xmin=-6.5,ymin=-3.5,xmax=0.5,ymax=3.5] {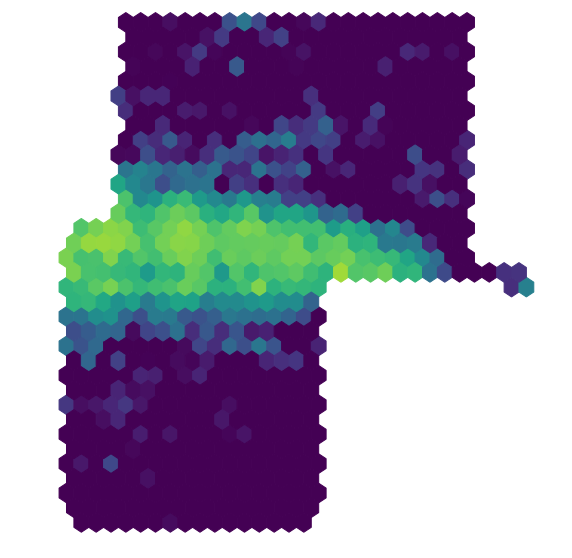};
                \draw [fill] (axis cs: -0.1, -0.7) rectangle (axis cs: 0.1, 0.7);
                \node [anchor = north east] at (axis cs: 0.2, -0.8) {AARX};
            \end{axis}
        \end{tikzpicture}
        \caption{Principal Component Baseline}
        \label{fig:heatmapprinc}
    \end{subfigure}
    \begin{subfigure}[t]{0.29\textwidth}
        \centering
        \begin{tikzpicture}
            \begin{axis}[
                width=0.8\textwidth,
                height=0.8\textwidth,
                scale only axis,
                xmin=-6.5,
                xmax=0.5,
                ymin=-3.5,
                ymax=3.5,
                xlabel={$x$ coordinate [m]},
                ylabel={$y$ coordinate [m]}
                ]
        
                \addplot[thick,blue] graphics[xmin=-6.5,ymin=-3.5,xmax=0.5,ymax=3.5] {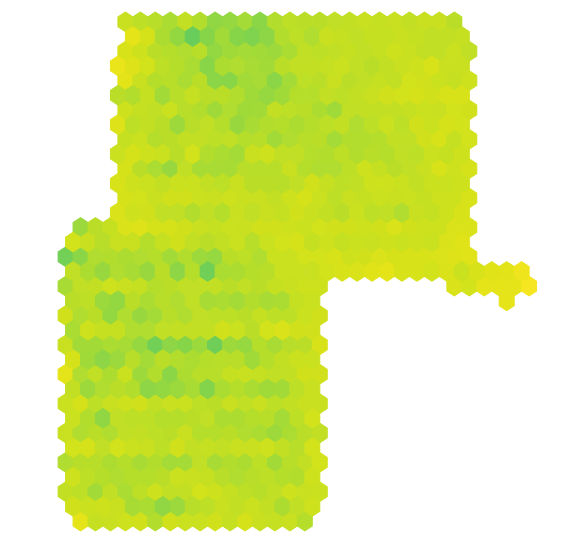};
                \draw [fill] (axis cs: -0.1, -0.7) rectangle (axis cs: 0.1, 0.7);
                \node [anchor = north east] at (axis cs: 0.2, -0.8) {AARX};
            \end{axis}
        \end{tikzpicture}
        \caption{\ac{DNN}, trained on whole area}
        \label{fig:heatmapdnn}
    \end{subfigure}
    \begin{subfigure}[t]{0.29\textwidth}
        \centering
        \begin{tikzpicture}
            \begin{axis}[
                width=0.8\textwidth,
                height=0.8\textwidth,
                scale only axis,
                xmin=-6.5,
                xmax=0.5,
                ymin=-3.5,
                ymax=3.5,
                xlabel={$x$ coordinate [m]},
                ylabel={$y$ coordinate [m]}
                ]
                
                \addplot[thick,blue] graphics[xmin=-6.5,ymin=-3.5,xmax=0.5,ymax=3.5] {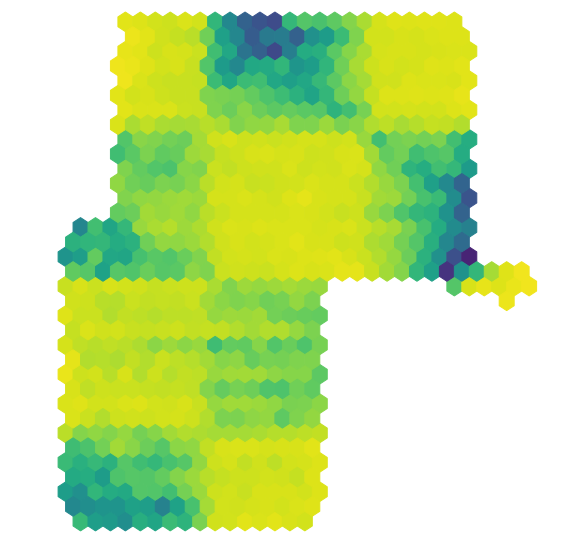};
                \draw [fill] (axis cs: -0.1, -0.7) rectangle (axis cs: 0.1, 0.7);
                \node [anchor = north east] at (axis cs: 0.2, -0.8) {AARX};
            \end{axis}
        \end{tikzpicture}
        \caption{\ac{DNN}, checkered training set}
        \label{fig:heatmapcheckered}
    \end{subfigure}
    \begin{subfigure}[b]{0.09\textwidth}
        \begin{tikzpicture}
            \begin{axis}[
                hide axis,
                scale only axis,
                colormap/viridis,
                colorbar,
                point meta min=-15,
                point meta max=0,
                colorbar style={
                    height=5cm,
                    width=0.3cm,
                    ytick={-15,-10,-5,0},
                    ylabel={Normalized Power $P|_\mathrm{dB}$ in dB}
                }]
                \addplot [draw=none] coordinates {(0,0)};
            \end{axis}
        \end{tikzpicture}
    \end{subfigure}
    \caption{Top view of normalized received powers $P$ at different locations visualized over the approximately $6\,\mathrm{m} \times 6\,\mathrm{m}$ large measurement area in the dataset. The black box marked ``AARX'' indicates the location of the antenna array.}
\end{figure*}

\section{Measurement Dataset}
\label{sec:dataset}

For evaluating our deep learning-based \ac{CSI} estimation, we draw on a dataset measured with our own channel sounder called \ac{DICHASUS} \cite{dichasus2021}.
More specifically, we use a publicly available indoor dataset entitled \emph{dichasus-015x} measured with an $M = 32$-antenna uniform planar array in a $6\,\mathrm{m} \times 6\,\mathrm{m}$ office room \cite{dataset-dichasus-015x}.
Overall, the dataset contains more than $85\,000$ position-tagged \ac{CSI} datapoints captured at a carrier frequency of $f_\mathrm{c} = 1.272\,\mathrm{GHz}$.
Each \ac{CSI} datapoint was estimated from multiple \ac{OFDM} symbols with $N_\mathrm{sub} = 1024$ subcarriers spread over a bandwidth of $50\,\mathrm{MHz}$.
We averaged over batches of 32 neighboring subcarriers for the purpose of \ac{DL} channel estimation, obtaining a total of $32$ averaged channel coefficients.

From this large $50\,\mathrm{MHz}$ bandwidth, we collect channel coefficients within some ranges into \emph{virtual} uplink and downlink channels.
As shown in Fig. \ref{fig:principle}, we grouped the channel coefficients for averaged subcarriers 0-7 to be the virtual \ac{UL} channel ($\mathbf H_\mathrm{U}$) and we call the channel coefficients for subcarrier 28 the virtual \ac{DL} channel vector ($\mathbf h_\mathrm{D}$).
Note that the \ac{CSI} dataset was measured with all antennas in the array exclusively operated as receivers at carrier frequency $f_\mathrm{c}$, but, thanks to channel reciprocity, the same channel coefficients can be assumed for the \ac{DL} direction.
Our choice corresponds to a virtual \ac{UL} channel with a bandwidth of $12.5\,\mathrm{MHz}$ centered around $f_\mathrm{c, UL} \approx 1.2533\,\mathrm{GHz}$ and a virtual \ac{DL} channel coefficient measured at carrier frequency $f_\mathrm{c, DL} \approx 1.2915\,\mathrm{GHz}$.
The center frequencies of uplink channel and downlink subcarrier are separated by $f_\mathrm{c, DL} - f_\mathrm{c, UL} \approx 38.2\,\mathrm{MHz}$.

Through random precoding according to Theorem \ref{thm:randomprecoding}, it is always possible to achieve a mean received power of $\bar P_\mathrm{rand} = \frac{1}{32}, ~ \bar P_\mathrm{rand}|_\mathrm{dB} \approx -15\,\mathrm{dB}$, i.e., approximately $15\,\mathrm{dB}$ less on average than is possible if the true channel vector $\mathbf h_\mathrm{D}$ was known by the \ac{BS}.
When precoding with the optimal constant \ac{DL} channel estimate $\mathbf w_\mathrm{max}$ according to Theorem \ref{thm:apriori}, we find that it is possible to achieve $\bar P_\mathrm{princ}|_\mathrm{dB} \approx -8.8\,\mathrm{dB}$ just by exploiting the prior distribution of the dataset.
The distribution of received powers over the dataset's measurement area for this case is illustrated in Fig. \ref{fig:heatmapprinc}:
Precoding with $\mathbf w_\mathrm{max}$ generates a single broad, forward-facing beam. 

\section{Deep Learning-Based CSI Estimation}
\label{sec:deeplearning}
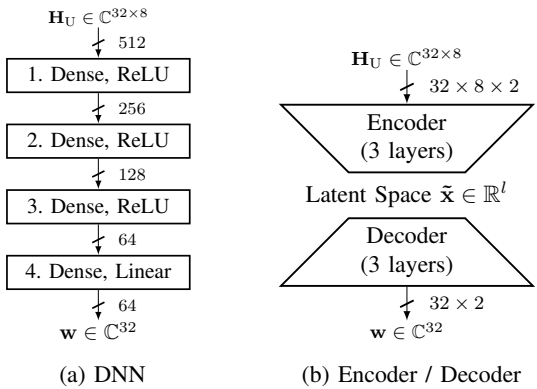
\begin{figure}
    \centering
    \begin{subfigure}[b]{0.45\columnwidth}
        \centering
        \scalebox{0.9}{
            \scalebox{0.9}{
    \begin{tikzpicture}
    	\node (l1) [minimum width = 3cm, minimum height = 0.55cm, draw, thick, inner sep = 3pt] at (0, 0) {1. Dense, ReLU};
    	\node (l2) [minimum width = 3cm, minimum height = 0.55cm, draw, thick, inner sep = 3pt, below = 0.5cm of l1] {2. Dense, ReLU};
    	\node (l3) [minimum width = 3cm, minimum height = 0.55cm, draw, thick, inner sep = 3pt, below = 0.5cm of l2] {3. Dense, ReLU};
    	\node (l4) [minimum width = 3cm, minimum height = 0.55cm, draw, thick, inner sep = 3pt, below = 0.5cm of l3] {4. Dense, Linear};
    
    	\node [anchor = south] at ($(l1.north) + (0, 0.4)$) {\footnotesize $\mathbf H_\mathrm{U} \in \mathbb C^{32 \times 8}$};
    	\node [anchor = north] at ($(l4.south) + (0, -0.4)$) {$\mathbf w \in \mathbb C^{32}$};

    	\draw [-latex] ($(l1.north) + (0, 0.5)$) -- (l1) node[midway, inner sep = 0] (p0to1) {};
    	\draw [-latex] (l1) -- (l2) node[midway, inner sep = 0] (p1to2) {};
    	\draw [-latex] (l2) -- (l3) node[midway, inner sep = 0] (p2to3) {};
    	\draw [-latex] (l3) -- (l4) node[midway, inner sep = 0] (p3to4) {};
    	\draw [-latex] (l4) -- ($(l4.south) + (0, -0.5)$) node[midway, inner sep = 0] (p4to5) {};

    	\draw [thick] ($(p0to1) + (-0.1, -0.05)$) -- ($(p0to1) + (+0.1, +0.05)$) node[midway, anchor = west, xshift = 0.2cm] {\footnotesize $512$};
    	\draw [thick] ($(p1to2) + (-0.1, -0.05)$) -- ($(p1to2) + (+0.1, +0.05)$) node[midway, anchor = west, xshift = 0.2cm] {\footnotesize $256$};
    	\draw [thick] ($(p2to3) + (-0.1, -0.05)$) -- ($(p2to3) + (+0.1, +0.05)$) node[midway, anchor = west, xshift = 0.2cm] {\footnotesize $128$};
    	\draw [thick] ($(p3to4) + (-0.1, -0.05)$) -- ($(p3to4) + (+0.1, +0.05)$) node[midway, anchor = west, xshift = 0.2cm] {\footnotesize $64$};
    	\draw [thick] ($(p4to5) + (-0.1, -0.05)$) -- ($(p4to5) + (+0.1, +0.05)$) node[midway, anchor = west, xshift = 0.2cm] {\footnotesize $64$};

    \end{tikzpicture}
}
        }
        \caption{\ac{DNN}}
        \label{fig:autoencoder}
    \end{subfigure}
    \begin{subfigure}[b]{0.45\columnwidth}
        \centering
        \scalebox{0.9}{
    
    \begin{tikzpicture}
        
        \node (l1) [trapezium, draw, thick, align=center, trapezium angle=315, inner sep=2pt, minimum height=1cm]{Encoder\\(3 layers)};
        
        \node (l2) [align=center, below=0cm of l1]{Latent Space $\mathbf {\tilde x} \in \mathbb R^l$};

        \node (l3) [trapezium, draw, thick, align=center, trapezium angle=45, inner sep=2pt, below=0.65cm of l1, minimum height=1cm, minimum width=2cm]{Decoder\\(3 layers)};

        \node [anchor = south] at ($(l1.north) + (0, 0.4)$) {\footnotesize $\mathbf H_\mathrm{U} \in \mathbb C^{32 \times 8}$};
    	\node [anchor = north] at ($(l3.south) + (0, -0.4)$) {\footnotesize $\mathbf w \in \mathbb C^{32}$};

        \draw [-latex] ($(l1.north) + (0, 0.5)$) -- (l1) node[midway, inner sep = 0] (p0to1) {};
        \draw [-latex] (l3) -- ($(l3.south) + (0, -0.5)$) node[midway, inner sep = 0] (p3to4) {};

        \draw [thick] ($(p0to1) + (-0.1, -0.05)$) -- ($(p0to1) + (+0.1, +0.05)$) node[midway, anchor = west, xshift = 0.2cm] {\footnotesize $32 \times 8 \times 2$};	
        \draw [thick] ($(p3to4) + (-0.1, -0.05)$) -- ($(p3to4) + (+0.1, +0.05)$) node[midway, anchor = west, xshift = 0.2cm] {\footnotesize $32 \times 2$};

    \end{tikzpicture}
}
        \caption{Encoder / Decoder}
        \label{fig:nnarchitecture}
    \end{subfigure}
    \caption{Architecture of different evaluated \acp{NN}}
\end{figure}

We evaluate five different deep learning-based downlink \ac{CSI} estimators $\hat { \bm \theta }$ which produce an estimate $\mathbf w$ from $\mathbf H_\mathrm{U}$:
\begin{itemize}
    \item \textbf{A \ac{DNN}}: This simple architecture consists of four dense layers as shown in Fig. \ref{fig:nnarchitecture}.
    \item \textbf{A \ac{DNN} with dropout}: Same as the \ac{DNN} architecture, except for a dropout layer with dropout rate $\delta$ inserted between dense layers 2 and 3, to improve generalization.
    \item \textbf{An Encoder / Decoder structure} with arbitrary latent space: both encoder and decoder consist of three dense hidden layers each. The encoder reduces $\mathbf H_\mathrm{UL}$ to a latent space representation $\mathbf {\tilde x} \in \mathbb R^l$, that the decoder infers $\mathbf h_\mathrm{D}$ from, see Fig. \ref{fig:autoencoder}. This choice of network architecture is justified in the fact that, as explained in Section \ref{sec:baselines}, both $\mathbf H_\mathrm{U}$ and $\mathbf h_\mathrm{D}$ are entirely predetermined by a possibly sparser latent representation $\mathbf x$. The Encoder may be able to approximate $f_\mathrm{U}^{-1}$ whereas the decoder may approximate $f_\mathrm{D}$.
    \item \textbf{An azimuth angle-based Encoder / Decoder structure}, i.e., the latent variable is forced to be an azimuth angle: Same as the previous architecture, except that encoder and decoder are now first trained separately: The encoder is trained to estimate the azimuth component $\alpha$ of the \ac{AoA} from $\mathbf H_\mathrm{U}$ and the decoder is trained to generate $\mathbf h_\mathrm{D}$ from $\alpha$, both supervised using position labels. The two \acp{NN} are then connected in series.
    \item \textbf{An azimuth and elevation angle-based Encoder / Decoder structure}, i.e., the two latent variables are forced to be elevation / azimuth angle estimates: Same as the previous architecture, except that the encoder now consists of two separate \acp{DNN}, for estimating both azimuth component $\alpha$ and elevation component $\beta$ of the \ac{AoA}. Again, the decoder is not trained on estimates, but on \acp{AoA} computed from position labels.
\end{itemize}

\begin{figure*}
    \centering
    \scalebox{0.95}{
        \input{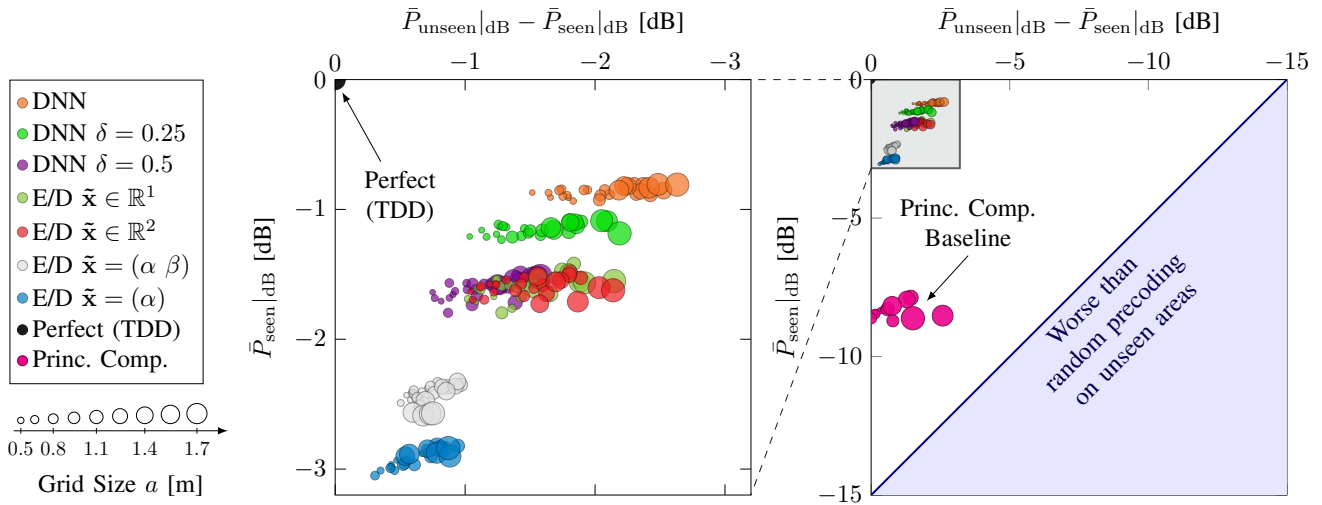}
    }
    \caption{\emph{Seen/unseen loss} diagram with mean received power losses of different \ac{NN} architectures and baselines on seen / unseen checkerboard fields, for grid sizes from $0.5\,\mathrm m$ to $1.8\,\mathrm m$. A larger grid size is indicated by a larger marker size.}
    \label{fig:generalization}
\end{figure*}

As a training loss function, we employ $\ell = 1 - P$, where $P$ is the squared cosine similarity between estimated channel $\mathbf w$ and true channel $\mathbf h_\mathrm{D}$ as defined in Eq. (\ref{eq:power}).
Instead of working with complex-valued channel coefficients, all \acp{NN} process channel coefficients in real / imaginary part representation.
For Fig. \ref{fig:heatmapdnn}, we randomly assigned $50\,\%$ of all datapoints to the training set, trained the previously described \ac{DNN} (without dropout) on this set and evaluated the \ac{DL} \ac{CSI} estimates on the complete dataset.
The mean normalized received power over the complete dataset was found to be $\bar P|_\mathrm{dB} \approx -1.3\,\mathrm{dB}$.

For Fig. \ref{fig:heatmapcheckered}, on the other hand, we partitioned the dataset into training set and test set in a checkerboard pattern with square side length $2\,\mathrm m$: All datapoints that were measured on ``white'' checkerboard squares were assigned to the training set, all datapoints measured on ``black'' checkerboard squares to the test set.
After evaluating the trained \ac{NN} on both training and test set, it is easy to see that the performance on the two sets differs significantly, with $\bar P|_\mathrm{dB} \approx -0.9\,\mathrm{dB}$ on the training set and $\bar P|_\mathrm{dB} \approx -4.2\,\mathrm{dB}$ on the test set.
Clearly, this indicates that the \ac{DNN} is overfitting on the training set:
It is not able to produce channel estimates $\mathbf w$ with comparable normalized \ac{DL} power on unseen regions of the physical space.

This poses the question as to how this overfitting can be mitigated, either through standard methods such as adding dropout layers or by forcing the \ac{DNN} to learn a sparser latent space representation.
To formalize our enquiry into this topic into a quantifiable manner, we introduce a framework for evaluating the ability to generalize of different \ac{NN} architectures.

\section{A Framework for Evaluating Generalization}
\label{sec:generalization}

\subsection{Defining Generalization}
In the context of \ac{DL} \ac{CSI} estimation, we refer to generalization not as the ability to generalize from training set to test set in the same physical area (which a \ac{DNN} can do well on our dataset, as is apparent from Fig. \ref{fig:heatmapdnn}), but as the ability to generalize from areas of the physical environment seen during training to areas that were not represented in the training set (which, considering the result in Fig. \ref{fig:heatmapcheckered}, is much harder).
Therefore, when talking about the quality of \ac{CSI} estimates, it is insufficient to just measure a single performance metric: Some \ac{NN} architectures perform well in seen areas, but worse in unseen locations whereas other architectures generalize better, but at the cost of a worse performance in seen areas.

To quantify this observation, as previously, the dataset is split into training and test set in a checkerboard pattern, with square side length $a$.
If $a$ is chosen to be small, the training set will contain a \ac{CSI} datapoint in physical proximity of each (unseen) test set location.
For large values of $a$, the \ac{NN} needs to be able to generalize across larger distances.
We define $\bar P_\mathrm{seen}$ to be the average received power (see Eq. (\ref{eq:poweravg})) when evaluating the trained \ac{NN} on the training set, and $\bar P_\mathrm{unseen}$ to be the average received power after evaluation on the test set.
$\bar P_\mathrm{seen}$ can be interpreted as the average loss in received power due to the suboptimal channel coefficient estimates.
We expect $\bar P_\mathrm{seen} \geq \bar P_\mathrm{unseen}$, so $\bar P_\mathrm{unseen}|_\mathrm{dB} - \bar P_\mathrm{seen}|_\mathrm{dB} < 0\,\mathrm{dB}$ can be interpreted as the loss in average received power incurred in unseen areas due to lack of training data in physical proximity.

\subsection{Seen/Unseen Loss Diagram and Baselines}
To visualize \ac{NN} performance, we propose a \emph{seen/unseen loss} diagram as in Fig. \ref{fig:generalization}, with losses $\bar P_\mathrm{seen}|_\mathrm{dB}$ on the horizontal axis and $\bar P_\mathrm{unseen}|_\mathrm{dB} - \bar P_\mathrm{seen}|_\mathrm{dB}$ on the vertical axis.
In any case, the random precoding strategy from Thm. \ref{thm:randomprecoding} provides a lower bound on the achievable performance (blue line and region).
The best performance is achieved if perfect \ac{DL} \ac{CSI} is available at the receiver at all time, so that $\bar P_\mathrm{seen}|_\mathrm{dB} = \bar P_\mathrm{unseen}|_\mathrm{dB} = 0\,\mathrm{dB}$; this operating point is marked with ``\ac{TDD}'', since, assuming perfect channel reciprocity, it is achievable by a \ac{TDD} system.
For all other estimators, the performance in seen and unseen areas depends on the partitioning of the dataset into training and test set.
For Fig. \ref{fig:generalization}, this partitioning was performed in the afforementioned checkerboard pattern.
The grid size parameter $a$ was swept from $0.5\, \mathrm m$ to $1.8\,\mathrm m$ with a step size of $0.1\,\mathrm m$.
As an additional baseline, based on Thm. \ref{thm:apriori}, we compute $\mathbf w_\mathrm{max}$ based on the training set and evaluate this vector for both training set ($\bar P_\mathrm{seen}$) and test set ($\bar P_\mathrm{unseen}$), yielding the principal component baseline (marked ``Princ. Comp.'') also illustrated in Fig. \ref{fig:generalization}.

\subsection{Discussion of Results}
Among all tested \acp{NN}, the \ac{DNN} without dropout performs best on previously seen data (i.e., with respect to $\bar P_\mathrm{seen}|_\mathrm{dB}$).
Increasing the dropout rate $\delta$ to $\delta = 0.25$ or $\delta = 0.5$ leads to a deteriorated performance with respect to $\bar P_\mathrm{seen}$, but better generalization.
Surprisingly, both encoder / decoder structures without predetermined latent space perform approximately equally well, regardless of the latent space dimensionality ($\mathbb R^1$ or $\mathbb R^2$), which may indicate that a sparse representation of \ac{CSI} is indeed possible.
A closer look at the learned latent representation would reveal that $\mathbf {\tilde x} \in \mathbb R^1$ is highly correlated with the azimuth angle.
Despite this observation, encoder / decoder structures with predetermined azimuth $\alpha$ / elevation $\beta$ latent spaces perform worse than all other \ac{NN} architectures on previously seen physical areas, but generalize better.

We find that the performance of all evaluated \ac{NN} architectures is significantly better than both \emph{random precoding} and \emph{principal component} baselines.
In fact, $\bar P_\mathrm{seen}|_\mathrm{dB} > -3.1\,\mathrm{dB}$ and $\bar P_\mathrm{unseen}|_\mathrm{dB} > -3.8\,\mathrm{dB}$ for all \acp{NN}, which demonstrates that a \ac{NN}-based approach is feasible and that some level of generalization to previously unseen physical areas is possible.
However, Fig. \ref{fig:generalization} also clearly shows significant performance differences between the various \ac{NN} architectures and the strong influence of the grid size on generalization.

\section{Summary and Outlook}
We found that \ac{NN}-based downlink channel estimation from available uplink \ac{CSI} significantly outperformed the baselines and that generalization to physical areas not represented in the training set is one of the major challenges of the approach.
With regards to generalization, we evaluated several different network architectures on measurement data.
Thanks to the public data, our research may be reproduced on the same dataset or compared to other datasets captured in different types of environments or with different antenna configurations.
The effect of the frequency separation between uplink and downlink channel may also be studied further.

\bibliographystyle{IEEEtran}
\bibliography{IEEEabrv,references}

\end{document}